\newtheorem {theorem} {Theorem}%[section]
\newtheorem {proposition} [theorem]{Proposition}
\newtheorem {remark} [theorem]{\bf Remark}
\title{Periodic orbits in analytically perturbed Poisson systems}
\author{Isaac A. Garc\'{\i}a$^{\ 1,*}$ and Benito Hern\'andez-Bermejo$^{\ 2}$}
\date{$^{\ (1)}$ {\small Departament de Matem\`atica. Universitat de
Lleida. \\ Avda. Jaume II, 69. 25001 Lleida, Spain. 
\\ E-mail: {\tt garcia@matematica.udl.cat}}
\\ $ $ \\
$^{\ (2)}$ {\small Departamento de F\'{\i}sica. Universidad Rey Juan Carlos.
\\ Calle Tulip\'{a}n S/N. 28933--M\'{o}stoles--Madrid, Spain. 
\\ E-mail: {\tt benito.hernandez@urjc.es}}}
\begin{document}

\maketitle

\begin{abstract}
Analytical perturbations of a family of finite-dimensional Poisson systems are considered. It is shown that the family  is analytically orbitally conjugate in $U \subset \mathbb{R}^n$ to a planar harmonic oscillator defined on the  symplectic leaves. As a consequence, the perturbed vector field can be transformed in the domain $U$ to the Lagrange standard form. On the latter, use can be made of averaging theory up to second order to study the existence, number and bifurcation phenomena of periodic orbits. Examples are given ranging from harmonic oscillators with a potential and Duffing oscillators, to a kind of zero-Hopf singularity analytic normal form.
\end{abstract}

\noindent {\bf Keywords:} Poisson systems; Casimir invariants; Hamiltonian systems; Perturbation theory; Limit cycles; Averaging theory.

\mbox{}

\noindent {\bf PACS codes:} 02.30.Hq, 05.45.-a, 45.10.Hj, 45.20.Jj.

\mbox{}

% 02.00.00 Mathematical methods in physics
%   02.30.Hq Ordinary differential equations
%   02.30.Ik Integrable systems
% 05.00.00 Statistical physics, thermodynamics, and nonlinear dynamical systems
%   05.45.-a Nonlinear dynamics and chaos
% 45.00.00 Classical mechanics of discrete systems
%   45.10.Hj Perturbation and fractional calculus methods 
%   45.20.-d Formalisms in classical mechanics
%   45.20.Jj Lagrangian and Hamiltonian mechanics

\vfill

\footnoterule

\noindent $\:\: ^*$ {\small Corresponding author. Telephone: (+34) 973702728. Fax: (+34) 973702702.}

\pagebreak

\section{Introduction}

Finite-dimensional Poisson systems (e.g. see \cite{wei1,olv1} and references therein for an overview and a historical discussion) are present in most fields of physics (including mechanics, electromagnetism, plasma physics, optics, dynamical systems theory) and applied mathematics, as well as closely related areas such as engineering (for instance in control theory), population dynamics, etc. Actually, the description of a given dynamical system in terms of a Poisson structure has implied the development of a number of applied mathematical tools for the obtainment of information about such vector field, including perturbative solutions, invariants, bifurcation properties, stability analysis, numerical integration, integrability results, etc. For instance, see \cite{dlrjc1}-\cite{bs9} and references therein for a sample. The significance of Poisson dynamical systems is due to several reasons. One is that they provide a generalization of classical Hamiltonian systems, allowing not only for odd-dimensional vector fields, but also because Poisson structure matrices admit a great diversity of forms apart from the fixed one associated with the Hamiltonian case. Additionally, an important feature of Poisson systems is that they are not restricted by the use of canonical transformations. In fact, every diffeomorphic change of variables maps a Poisson system into another Poisson system. 

An additional topic relevant for this work is averaging theory. Its starting point can be traced back to Lagrange's study of the three-body problem as a perturbation of the two-body problem, in spite that no formal proof of the validity of the method was given until Fatou's in the XXth century. Soon after, it was the subject of further investigations that led to the establishment of the averaging method as one of the classical tools for the analysis of nonlinear oscillations. Essentially, the underlying idea of this methodology is to approximate the initial system by an averaged (and presumably simpler) version of it, in such a way that the analysis of certain properties of the averaged system will lead to an understanding of the original system. 

The first-order (in the small perturbation parameter $\varepsilon$) averaging theory for studying periodic orbits of differential equations in standard form and in arbitrary finite dimension $n$ is a classical tool for the analysis of vector fields, see \cite{SVM, Verhulst}. In addition, in \cite{Bu-Lli} the averaging theory for differential equations in $\mathbb{R}^n$ up to order $3$ in $\varepsilon$ was presented. In \cite{Ga-Gi}, the averaging theory in $\mathbb{R}^n$ was described in a recursive way to any order in $\varepsilon$, and it was applied to the center problem for planar differential systems. The second-order averaging theory in $\mathbb{R}^n$ was considered in \cite{Bu-Gi-Lli}, where the key tool in that development is the Lyapunov-Schmidt reduction method applied to the translation Poincar\'e-Andronov mapping. In a recent work \cite{Co-Ga-Pro} the explicit expressions of the bifurcation functions up to third order of the averaging theory in $\mathbb{R}^n$ was presented. Also recently, in \cite{Gi-Gr-Ll} the averaging theory was explicitly developed in $\mathbb{R}$ up to an arbitrary order in $\varepsilon$. Some additional recent works devoted to improve and apply the averaging method are \cite{Bu-Ga,Ci-Lli-Te}.

As anticipated, one of the advantages of the Poisson representation is that it allows the application of diverse methods (frequently coming from Hamiltonian dynamics) in a more general context. According to this perspective, it is worth noting that the averaging method \cite{SVM,Verhulst} provides a quantitative relationship between the solutions of some non-autonomous differential systems and the solutions of the averaged (with respect to the independent variable) autonomous differential system. When the system is $T$-periodic, by using the averaging method it is possible to extract some precise information about the existence and number of $T$-periodic solutions as well as their stability. It is worth emphasizing that for the application of averaging theory it is first necessary to express the system in the called standard form: the right-hand side of the system must be sufficiently small, and actually it must be a function of order $\mathcal{O}(\varepsilon)$ when depending on a small real perturbation parameter $\varepsilon$. To express a perturbed system into a standard form by means of changes of variables and a rescaling of the independent variable is in general neither algorithmic nor an easy task. The purpose of this work is to show that the entire approach just described becomes possible for some classes of Poisson systems when they are perturbed under any nonlinear analytic vector field. 

The structure of the article is the following. Section 2 provides a detailed description of the method developed. Such description is divided in three main parts. Thus, Subsection 2.1 is devoted to the description of the Poisson system family of interest together with the general reduction to the Darboux canonical form. In Subsection 2.2 the transformation of such canonical form into the Lagrange standard form is constructed. Taking this as starting point, the application of averaging theory to the analysis of periodic orbits is the issue of Subsection 2.3. Once the method is fully developed, Section 3 is focused on the presentation of three examples. 

\section{Description of the method}

The purpose of the forthcoming development is to generalize the application of the averaging theory and detect periodic orbits in the framework of Poisson systems analytically perturbed. The Poisson systems to be considered are those globally (in a domain) and constructively analytically reducible to an orbitally equivalent planar harmonic oscillator defined on the symplectic leaves. 

\subsection{Reduction to the Darboux canonical form}

Our starting point is an analytic Poisson structure of dimension $n$ and constant rank $2$ in a open and connected set (a domain) $\Omega \subseteq \mathbb{R}^n$ containing the origin. In terms of local coordinates we have a vector field of the form:
\begin{equation}
\label{poisson-aver-5}
	\frac{\mbox{\rm d}x}{\mbox{\rm d}t} = {\cal J}(x) \cdot \nabla H (x).
\end{equation}
Here $H(x)$ is the analytic Hamiltonian function, and the structure matrix is assumed to have the form ${\cal J}(x) = I(x) {\cal J}_0(x)$, where ${\cal J}_0(x)$ is any analytic rank-2 structure matrix defined in $\Omega$, and  $I : \Omega \to \mathbb{R}$ is any analytic nonvanishing first integral (hence it has the functional form $I (x) \equiv \mu(D_3(x), \ldots , D_n(x),H(x))$, where $\{ D_3(x), \ldots , D_n(x) \}$ is a complete set of functionally independent Casimir invariants of the system, see \cite{bs7,bs6} for further details).

\begin{theorem}
\label{ramones}
Let us consider the Poisson system (\ref{poisson-aver-5}) with Hamiltonian of the form
\begin{equation}\label{Poisson-Aver-1}
    H(x) = \frac{1}{2} [x_1^2 h_1^2(x) + x_2^2 h_2^2(x)]
\end{equation}
where $h_i$ are analytic functions in $\Omega$ satisfying $h_i(0) = 1$ for $i=1,2$. Assume in addition that there exist a complete set of independent Casimir invariants of the form
\begin{equation}\label{Poisson-Aver-2}
D_j(x) = x_j + \phi_j(x)
\end{equation}
where $\phi_j(x)$ are analytic functions in $\Omega$ and such that $\phi_j(0) = 0$ and $\nabla \phi_j(0) = 0$ for $j=3, \ldots, n$. Then system (\ref{poisson-aver-5}) is analytically orbital equivalent to a linear Darboux canonical form in a domain $U \subseteq \Omega$ containing the origin. Moreover, such reduction can be constructively determined.
\end{theorem}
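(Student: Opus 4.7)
The plan is to construct, in two stages, an analytic change of coordinates near the origin together with a time reparametrization, whose composition carries (\ref{poisson-aver-5}) to a planar linear harmonic oscillator on each symplectic leaf $D_j=\text{const}$. The two stages peel off, respectively, the nontriviality in the Casimirs and the nontriviality in the Hamiltonian.

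First I would use the Casimirs as coordinates: set $y_i=x_i$ for $i=1,2$ and $y_j=D_j(x)$ for $j\ge 3$. Because $\phi_j(0)=0$ and $\nabla\phi_j(0)=0$, the Jacobian of this map at the origin is the identity, and the analytic inverse function theorem supplies an analytic diffeomorphism from a neighborhood $U_1\subseteq\Omega$ of the origin onto its image. In the new coordinates the last $n-2$ variables are the Casimirs themselves, so the pushed-forward $\mathcal{J}_0$ annihilates $e_3,\ldots,e_n$; combined with skew-symmetry and the rank-two hypothesis this forces a block form in which only the upper-left $2\times 2$ entries survive, of the shape $f(y)\,\epsilon$ with $\epsilon$ the standard symplectic matrix. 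Since the kernel of $\mathcal{J}_0(0)$ coincides with $\mathrm{span}(e_3,\ldots,e_n)$ by linear independence of $\nabla D_j(0)=e_j$, the scalar $f$ is analytic and $f(0)\ne 0$.

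Next I would remove the nonlinearity in the Hamiltonian. Writing $\tilde h_i(y)$ for $h_i$ expressed in the $y$ coordinates, I define $z_i=y_i\tilde h_i(y)$ for $i=1,2$ and $z_j=y_j$ for $j\ge 3$. Since $\tilde h_i(0)=1$, the Jacobian at the origin is again the identity, giving an analytic diffeomorphism on a neighborhood $U\subseteq U_1$. In these coordinates $H=\frac{1}{2}(z_1^2+z_2^2)$, the $z_j$ for $j\ge 3$ remain Casimirs, and the structure matrix preserves its block pattern with a new analytic scalar $g(z)$, nonvanishing at the origin, in place of $f$. The globally nonvanishing factor $I\cdot g$ multiplying $\nabla H$ is then absorbed by the analytic time rescaling $d\tau=I\,g\,dt$, which is precisely the freedom allowed by orbital equivalence; the resulting system is the linear Darboux canonical form $z_1'=z_2$, $z_2'=-z_1$, $z_j'=0$ for $j\ge 3$.

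The construction is explicit, so the constructive claim follows at once. The main obstacle will be the bookkeeping in the second stage: verifying that the bracket $\{z_1,z_2\}$ computed from $z_i=y_i\tilde h_i$ via the tensorial transformation law is really of the claimed form $g(z)\,\epsilon$ with $g$ analytic and nonvanishing throughout $U$, and that the Casimir identities $\{z_j,\,\cdot\,\}=0$ for $j\ge 3$ truly persist. Both reduce to a direct computation using $z_j=y_j$ for $j\ge 3$ together with the block structure produced in the first stage, combined with a shrinking of $U$ if necessary to keep $I$, $f$ and $g$ bounded away from zero.
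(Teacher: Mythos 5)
Your proposal is correct and essentially the paper's own argument: the composition of your two coordinate changes is exactly the single map $\Phi$ of (\ref{Poisson-Aver-3}) (namely $y_i=x_i h_i(x)$ for $i=1,2$ and $y_j=D_j(x)$ for $j\geq 3$, an analytic diffeomorphism with identity linear part), and your final rescaling $d\tau = I\, g\, dt$ is the paper's time reparametrization (\ref{jr2darbntt}). The justification is also the same — using the Casimirs as coordinates annihilates all brackets except $\{y_1,y_2\}$, whose nonvanishing follows from preservation of the constant rank $2$ under diffeomorphisms — so the two-stage bookkeeping changes nothing of substance.
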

\begin{proof}
Under our hypotheses, the transformation $x \mapsto \Phi(x) = y$ defined by
\begin{equation}\label{Poisson-Aver-3}
    \left\{ \begin{array}{rcl}
    y_i(x) & = & x_i h_i(x) \;\: , \;\:\;\: i = 1,2 \\
    y_j(x) & = & D_j(x) \;\: , \;\:\;\: j = 3, \ldots ,n
    \end{array} \right.
\end{equation}
is an analytic diffeomorphism in a domain $U \subseteq \Omega$ containing the origin whose inverse $\Phi^{-1}(y) = \mathcal{I}_n y + \cdots$ has a linear part being the identity matrix of order $n$. More precisely $U = \Phi^{-1}(\Phi(\Omega))$.

Combining the ideas of \cite{bs7,bs6} we shall develop a generalized procedure such that our Poisson system can be reduced globally in $U$ to a one degree of freedom Hamiltonian system and the Darboux canonical form is accomplished globally and diffeomorphically in $U$. This can be done as follows: in the new coordinate system $(y_1, \ldots ,y_n)$ we arrive to the new Poisson system
\[
	\frac{\mbox{\rm d}y}{\mbox{\rm d}t} = {\cal J}^*(y) \cdot \nabla H^*(y)
\]
where $H^*(y)=H \circ \Phi^{-1}(y)=(y_1^2+y_2^2)/2$, and
\[
    {\cal J}^*(y)= I \left( \Phi^{-1}(y) \right) \, \eta (y) \cdot {\cal J}_D.
\]
Here $\eta (y) = \left[ \left. (\nabla_x y_1(x))^T \cdot {\cal J}(x) \cdot ( \nabla_x y_2(x)) \right] \right| _{\Phi^{-1}(y)}$ where by construction it is $\eta (y) \neq 0$ in $U$ as a direct consequence of the preservation of the structure matrix (constant) rank under any diffeomorphic transformation; and in addition
\begin{equation}
\label{jdnd}
	{\cal J}_D \equiv
	\left( \begin{array}{cc} 0 & 1 \\ -1 & 0 \end{array} \right)
	\oplus {\cal O}_{n-2} =
\left( \begin{array}{cccc}
      0  & 1 & \vline & \mbox{} \\
      -1 & 0 & \vline & \mbox{} \\ \hline
      \mbox{} & \mbox{} & \vline & {\cal O}_{n-2}
      \end{array} \right)
\end{equation}
is the Darboux canonical form matrix for the rank-2 case, where ${\cal O}_{n-2}$ denotes the null square matrix of order $n-2$.

Accordingly, in the next reduction step we define the new time $\tau$ in terms of the time reparametrization
\begin{equation}
\label{jr2darbntt}
	\mbox{\rm d} \tau = I \left( \Phi^{-1}(y) \right) \, \eta (y) \, \mbox{\rm d} t
\end{equation}
After both transformations, the outcome is the linear Poisson system
\begin{equation}
\label{poisson-aver-6}
	\frac{\mbox{\rm d}y}{\mbox{\rm d} \tau} = {\cal J}_D \cdot \nabla H^*(y)
\end{equation}
where the Darboux canonical form is therefore constructed. In other words, both Poisson systems (\ref{poisson-aver-5}) and (\ref{poisson-aver-6}) are orbitally equivalent in $U$, and consequently the former is orbitally linearized.
\end{proof}

\begin{remark}
{\rm In the previous procedure, the resulting Hamiltonian function corresponds to a planar harmonic oscillator $H^*(y)=(y_1^2+y_2^2)/2$ defined on the symplectic leaves. Note that in our case the structure matrix ${\cal J}_D$ coincides with the real Jordan canonical form of a real matrix having associated eigenvalues $\pm i$ with $i^2 = -1$ as well as the zero eigenvalue with algebraic multiplicity $n-2$. In particular, this implies that the initial Poisson system (\ref{poisson-aver-5}) must have a couple of nonzero pure imaginary eigenvalues $\pm i \omega$ as well as the zero eigenvalue with algebraic multiplicity $n-2$. Obviously, the eigenvalues $\pm i \omega \neq 0$ are rescaled to their final value $\pm i$ after the time reparametrization (\ref{jr2darbntt}).}
\end{remark}

\subsection{Perturbation and reduction to the Lagrange standard form}

We consider now the analytical perturbations of the initial Poisson system
\begin{equation}
\label{poisson-aver-1}
	\frac{\mbox{\rm d}x}{\mbox{\rm d}t} = {\cal J}(x) \cdot \nabla H (x) + \varepsilon F(x; \varepsilon)
\end{equation}
where $\varepsilon \neq 0$ is a small perturbation real parameter and $F$ is an analytic vector field in $\Omega$ depending analytically on the parameter $\varepsilon$ and satisfying $F(0; \varepsilon) = 0$ and $\nabla_x F(0; \varepsilon) = 0$. The purpose of the perturbations will be to analyze the bifurcation phenomena of periodic orbits via the averaging theory. We shall see that the reduction of Theorem \ref{ramones} can be used in order to write in suitable coordinates the perturbed system (\ref{poisson-aver-1}) into the so-called Lagrange standard form.

\begin{theorem}\label{ramones2}
Consider the perturbation (\ref{poisson-aver-1}) of the Poisson system (\ref{poisson-aver-5}) in $\Omega$. Then, after applying the same reduction performed in Theorem \ref{ramones}, followed by an $n$-dimensional cylindrical transformation, system (\ref{poisson-aver-1}) is constructively transformed into a Lagrange standard form.
\end{theorem}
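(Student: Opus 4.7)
The plan is to concatenate three transformations on the perturbed system (\ref{poisson-aver-1}): the diffeomorphism $\Phi$ of Theorem \ref{ramones}, the time rescaling (\ref{jr2darbntt}), and a cylindrical change of coordinates in the $(y_1,y_2)$--plane; then to use the angular variable as the new independent variable.

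First I would push (\ref{poisson-aver-1}) forward by $\Phi$ as defined in (\ref{Poisson-Aver-3}). Since $\Phi$ is an analytic diffeomorphism in $U$, this yields a system whose unperturbed part coincides with that of Theorem \ref{ramones} and whose perturbation is $\varepsilon\tilde F(y;\varepsilon)=\varepsilon D\Phi(\Phi^{-1}(y))\,F(\Phi^{-1}(y);\varepsilon)$, still analytic in $U$ with $\tilde F(0;\varepsilon)=0$ and vanishing Jacobian at the origin. Applying the time reparametrization (\ref{jr2darbntt}) divides the whole right-hand side by the nonvanishing analytic factor $I(\Phi^{-1}(y))\eta(y)$; the unperturbed part becomes the linear Darboux form (\ref{poisson-aver-6}) and the system takes the shape
\[
\frac{dy}{d\tau}=\mathcal{J}_D\cdot\nabla H^*(y)+\varepsilon G(y;\varepsilon),
\]
with $G$ analytic in $U$ and of order $\mathcal{O}(|y|^2)$ at the origin.

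Second I would introduce cylindrical coordinates by $y_1=r\cos\theta$, $y_2=r\sin\theta$, keeping $y_3,\ldots,y_n$ untouched (they remain the Casimir coordinates). A direct computation via $\dot r=(y_1\dot y_1+y_2\dot y_2)/r$ and $\dot\theta=(y_1\dot y_2-y_2\dot y_1)/r^2$ shows that the unperturbed flow gives $\dot r=0$, $\dot\theta=-1$, $\dot y_j=0$, while the perturbation contributes $\mathcal{O}(\varepsilon)$ corrections, so that
\[
\frac{dr}{d\tau}=\varepsilon R,\quad \frac{d\theta}{d\tau}=-1+\varepsilon\Theta,\quad \frac{dy_j}{d\tau}=\varepsilon Z_j,
\]
where $R$, $\Theta$, $Z_j$ are analytic functions of $(r,\theta,y_3,\ldots,y_n;\varepsilon)$ and $2\pi$--periodic in $\theta$.

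Third, for $\varepsilon$ sufficiently small the angular speed $d\theta/d\tau$ stays bounded away from zero on any compact set with $r$ bounded away from zero. Hence I would take $\theta$ as the new independent variable and divide each component by $d\theta/d\tau$. A geometric expansion of $1/(-1+\varepsilon\Theta)$ in powers of $\varepsilon$ then yields the Lagrange standard form
\[
\frac{dr}{d\theta}=\varepsilon\hat R(r,\theta,y_3,\ldots,y_n;\varepsilon),\quad \frac{dy_j}{d\theta}=\varepsilon\hat Z_j(r,\theta,y_3,\ldots,y_n;\varepsilon),
\]
with right-hand side of order $\varepsilon$ and $2\pi$--periodic in $\theta$, which is precisely the form required by averaging theory. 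The only delicate point is the standard coordinate singularity of the cylindrical change at $r=0$; this is not a real obstacle provided one restricts attention to an annular neighbourhood of the origin on each symplectic leaf, as is customary when searching for limit cycles bifurcating from the unperturbed period annulus.
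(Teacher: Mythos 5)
Your proposal is correct and follows essentially the same route as the paper: push the perturbed system forward by $\Phi$, apply the time reparametrization (\ref{jr2darbntt}), pass to cylindrical coordinates $y_1=r\cos\theta$, $y_2=r\sin\theta$, $z_j=y_j$, and then use $\theta$ as the new independent variable where $\dot{\theta}$ is bounded away from zero (for $r>0$ and $\varepsilon$ small) to obtain the $2\pi$-periodic standard form. Your remarks on the $r=0$ coordinate singularity and the restriction to $r$ bounded away from zero match the paper's own caveat that the system is only well defined for $r>0$.
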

\begin{proof}
Notice that any perturbation field $F$ preserves both the linear part and the analyticity in $U$ of the system obtained after the transformation $x \mapsto \Phi(x) = y$ defined by (\ref{Poisson-Aver-3}) and the time reparametrization (\ref{jr2darbntt}) leading to the Darboux canonical form for the unperturbed system. We thus obtain that (\ref{poisson-aver-1}) becomes the analytic system
\begin{equation}
\label{poisson-aver-4}
   \frac{\mbox{\rm d}y}{\mbox{\rm d} \tau} = {\cal J}_D \cdot \nabla H^*(y) + \varepsilon F^*(y; \varepsilon)
\end{equation}
defined in $\Phi(U)$. Now we perform the following change to $n$-dimensional cylindrical coordinates
$$
y \mapsto \Psi(y) = (\theta, r, z) \in \mathbb{S}^1 \times U^*
$$
defined by $y_1 = r \cos\theta$, $y_2 = r \sin\theta$ and $z_j = y_j$ for $j=3, \ldots, n$. Here $\mathbb{S}^1 = \mathbb{R} / 2 \pi \mathbb{Z}$ and $U^* = \{ (r, z) \in \mathbb{R}^+ \times \mathbb{R}^{n-2} :  (r \cos\theta, r \sin\theta, z) \in \Phi(U) \mbox{ for all } \theta \in \mathbb{S}^1 \}$. In these coordinates the system becomes
\begin{eqnarray}
\dot{r} &=& \varepsilon \, G_1^*(\theta, r, z; \varepsilon) \ , \nonumber \\
\dot{\theta} &=& -1 + \frac{\varepsilon}{r} G_2^*(\theta, r, z; \varepsilon)   \ , \label{poisson-aver-2} \\
\dot{z}_j &=& \varepsilon \, G_j^*(\theta, r, z; \varepsilon)   \ , \ j=3, \ldots, n \ , \nonumber
\end{eqnarray}
where
\begin{eqnarray*}
G_1^*(\theta, r, z; \varepsilon) &=& \cos\theta \, F_1^*(\Psi^{-1}(\theta, r, z); \varepsilon) + \sin\theta \, F_2^*(\Psi^{-1}(\theta, r, z); \varepsilon) \ ,  \\
G_2^*(\theta, r, z; \varepsilon) &=&  \cos\theta \, F_2^*(\Psi^{-1}(\theta, r, z); \varepsilon) - \sin\theta \, F_1^*(\Psi^{-1}(\theta, r, z); \varepsilon) \ ,  \\
G_j^*(\theta, r, z; \varepsilon) &=& F_j^*(\Psi^{-1}(\theta, r, z); \varepsilon) \ , \ j=3, \ldots, n \ .
\end{eqnarray*}
Notice that this system is only well defined for $r >0$. Moreover, in this region, since for sufficiently small $\varepsilon$ we have
$\dot{\theta} < 0$ in an arbitrarily large ball centered at the origin, we can
rewrite the differential system (\ref{poisson-aver-2}) in such ball into the form
\begin{equation}\label{poisson-aver-3}
\frac{d r}{d \theta}  = \varepsilon  \, G_1(\theta, r, z; \varepsilon) \ , \ \frac{d z}{d \theta}  = \varepsilon \, G_2(\theta, r, z; \varepsilon) ,
\end{equation}
by taking $\theta$ as the new independent variable and using an obvious vectorial notation for the $z$ variables. It is worth emphasizing that any $2 \pi$--periodic solution of (\ref{poisson-aver-3}) corresponds biunivocally with a periodic orbit of (\ref{poisson-aver-1}) in $U$.

System (\ref{poisson-aver-3}) is $2 \pi$--periodic in variable $\theta$ and
is in the Lagrange standard form.
\end{proof}

\subsection{Periodic orbits via averaging theory}

Now we shall present the basic results from averaging theory that we shall need to apply the theory in the forthcoming sections. The classical averaging theory, that is, the first order in $\varepsilon$ averaging theory, is presented for example in Theorems 11.5 and 11.6 of Verhulst \cite{Verhulst}. See also \cite{SVM} for more details. Concerning averaging theory up to third order in $\varepsilon$, the reader is referred to \cite{Bu-Lli}.

The reduction to the standard form (\ref{poisson-aver-3}) performed in Theorem \ref{ramones2} can now be used in order to apply the averaging theory. More precisely, we define the vector function $G(\theta, r, z; \varepsilon) = (G_1(\theta, r, z; \varepsilon), G_2(\theta, r, z; \varepsilon))$ and since $G$ is  analytic in the parameter $\varepsilon$, we can expand in Taylor series the function $G(\theta, r, z; \varepsilon) = \sum_{k \geq 0} g_{k}(\theta, r, z) \, \varepsilon^k$.

The averaging theory up to second order shall be used in what follows, although there are no restrictions in order to increase the perturbation order if required. The auxiliary function is defined as:
\[
    \rho(\theta , r,z) = D_{(r,z)} g_{0}(\theta, r, z) \, \int_0^\theta g_{0}(s, r, z) \, ds + g_{1}(\theta, r, z)
\]
where $D_{(r,z)} f(\theta, r, z)$ denotes the Jacobian matrix with respect to the derivation variables $(r,z)$ for any differentiable function $f$. The following bifurcation functions can now be constructed by averaging with respect to the angular variable $\theta$:
\begin{equation}\label{poisson-aver-12}
\bar{g}_{0}(r, z) = \frac{1}{2 \pi} \int_0^{2 \pi} g_{0}(\theta, r, z) \, d \theta \ , \ \ \ \bar{\rho}(r, z) = \frac{1}{2 \pi} \int_0^{2 \pi} \rho(\theta , r,z) \, d \theta \ .
\end{equation}
Then, up to second order of perturbation we have:
\begin{itemize}
\item[(i)] If $\bar{g}_{0}(r, z) \not\equiv 0$, then for each simple zero $(r_0, z_0) \in U^*$ with $r_0>0$ of $\bar{g}_{0}(r, z)$ and for all $|\varepsilon| > 0$ sufficiently small, there exists a $2 \pi-$periodic solution $\xi(\theta ;\varepsilon)$ of equation (\ref{poisson-aver-3}) such that $\xi(0 ;\varepsilon) \rightarrow (r_0, z_0)$ as $\varepsilon \rightarrow 0$. Moreover, if all the eigenvalues of $D_{(r,z)} \bar{g}_{0}(r_0, z_0)$  have negative real part, the corresponding periodic orbit $\xi(\theta ;\varepsilon)$ is asymptotically stable for $\varepsilon$ sufficiently small. Otherwise, if one of such eigenvalues has positive real part, then $\xi(\theta ;\varepsilon)$ is unstable.

\item[(ii)] If $\bar{g}_{0}(r, z) \equiv 0$ and $\bar{\rho}(r, z) \not\equiv 0$, then for each simple zero $(r_0, z_0) \in U^*$ with $r_0>0$ of $\bar{\rho}(r, z)$ and for all $|\varepsilon| > 0$ sufficiently small, there exists a $2 \pi-$periodic solution $\xi(\theta ;\varepsilon)$
of equation (\ref{poisson-aver-3}) such that $\xi(0 ;\varepsilon) \rightarrow (r_0, z_0)$ as $\varepsilon \rightarrow 0$.
\end{itemize}

\begin{remark}\label{explain}
{\rm As we mentioned, when $\bar{g}_{0}(r, z) \not\equiv 0$, a simple zero $(r_0, z_0) \in U^*$ with $r_0>0$ of $\bar{g}_{0}$ corresponds, for $|\varepsilon| > 0$ sufficiently small, with a $2 \pi-$periodic solution of equation (\ref{poisson-aver-3}) with initial condition tending to $(r_0, z_0)$ as $\varepsilon \rightarrow 0$. Clearly, this periodic orbit is in correspondence with a periodic orbit of (\ref{poisson-aver-4}) such that its initial condition tends to a point $y_0 \in \Phi(U)$ as $\varepsilon \rightarrow 0$ which in turn corresponds with a periodic orbit of (\ref{poisson-aver-1}) having an initial condition which tends to a point $x_0 \in U$ as $\varepsilon \rightarrow 0$.  }
\end{remark}

\begin{remark}\label{putada}
{\rm As described, after some computations it is possible to find the function $g_{0}(\theta, r, z)$ in terms of $F \circ \Phi^{-1} \circ \Psi^{-1}(\theta, r, z)$ needed to arrive at the expression of the first bifurcation function $\bar{g}_{0}(r, z) = \frac{1}{2 \pi} \int_0^{2 \pi} g_{0}(\theta, r, z) \, d \theta$ as defined in (\ref{poisson-aver-12}). It is worth noting at this point that a common computational difficulty may arise, as far as sometimes the former quadrature is not amenable to a closed form evaluation, even for simple perturbation fields $F$. In those cases, a local analysis of the zeroes of $\bar{g}_{0}$ near $(r, z) =(0,0)$ can be performed with the aim of computing an upper bound of the maximum number of small amplitude periodic orbits. For an instance, see Section \ref{examples}. }
\end{remark}

In what follows, the previous theory is illustrated by means of some applied examples.

\section{Examples}\label{examples}

\subsection{Harmonic oscillator with a potential}

We shall now study the particular case $n=3$, $\Omega = \mathbb{R}^3$, with invariant Casimir $D(x) = x_3$ and Hamiltonian $H(x) = \frac{1}{2} x_1^2 + \frac{1}{2} x_2^2 + V(x)$ with potential function $V(x) = \frac{1}{2} x_2^2 h(x_1, x_3) [2 + h(x_1, x_3)]$. Notice that in this case we have, according with (\ref{Poisson-Aver-1}) and (\ref{Poisson-Aver-2}): $\phi \equiv 0$, $h_1 \equiv 1$ and $h_2(x) = 1 + h(x_1, x_3)$. Therefore the diffeomorphism $x \mapsto \Phi(x) = y$ defined in (\ref{Poisson-Aver-3}) on $U$ is given by $\Phi(x) = (x_1, x_2 (1 + h(x_1, x_3)), x_3)$. Its inverse is
$$
\Phi^{-1}(y) = \left(y_1, \frac{y_2}{1 + h(y_1, y_3)}, y_3 \right)
$$
and the rescaling factor is $\eta(y) = 1 + h(y_1, y_3)$.

Let us now take the perturbed system (\ref{poisson-aver-1}) with a perturbation field $F(x; \varepsilon)$ independent of $\varepsilon$ for simplicity, and having an homogeneous quadratic polynomial in $x$ as nonlinearities. Therefore we will assume
\begin{equation}\label{poisson-aver-10}
F(x; \varepsilon) = \left(\sum_{i+j+k=2} a_{ijk} x_1^i x_2^j x_3^k, \sum_{i+j+k=2} b_{ijk} x_1^i x_2^j x_3^k, \sum_{i+j+k=2} c_{ijk} x_1^i x_2^j x_3^k \right) .
\end{equation}

For the sake of clarity, we shall only consider some specific cases of the computations. The next result is about the case $h(x_1, x_3) = x_3$ and $c_{200} = 0$.

\begin{proposition}
Consider the unperturbed Poisson system (\ref{poisson-aver-5}) in $\mathbb{R}^3$ with Hamiltonian $H(x) = \frac{1}{2} x_1^2 + \frac{1}{2} x_2^2 + V(x)$, potential $V(x) = \frac{1}{2} x_2^2 x_3 (2 +  x_3)$ and structure matrix
$$
{\cal J}(x) = \left( \begin{array}{ccc} 0 & 1  & 0 \\ -1 & 0 & 0 \\ 0 & 0 & 0 \end{array} \right) .
$$
Now we take the perturbed system (\ref{poisson-aver-1}) with $F(x)$ given by (\ref{poisson-aver-10}) with $c_{200} = 0$. Define the unbounded domain $U = \mathbb{R}^2 \times (-1, \infty)$. Assuming that $c_{020} \neq 0$ and $c_{002} - 2 (a_{101} + b_{011}) \neq 0$, for $|\varepsilon| \neq 0$ sufficiently small system (\ref{poisson-aver-1}) has exactly $m$ periodic orbits $\xi_m(t ;\varepsilon)$ such that $\xi_m(0 ;\varepsilon)$ tends to a point in $U$ as $\varepsilon \rightarrow 0$ under the following parameter restrictions:
\begin{enumerate}
\item[(i)] $m=1$ if $c_{002} (a_{101} + b_{011}) \neq 0$, $c_{002}/c_{020} < 0$ and $c_{002} \in \mathbb{R} \backslash [0, k]$ if $k > 0$ or $c_{002} \in \mathbb{R} \backslash [k, 0]$ if $k < 0$ where $k = 2 (a_{101} + b_{011})$.

\item[(ii)] $m=0$ in the complementary parameter restrictions of (i).
\end{enumerate}
If additionally $c_{020} = c_{002}  =  b_{002} = b_{200} =  a_{110} = b_{020} = 0$, $a_{101} =- b_{011}$, $a_{200} = 2 c_{101}$ we have:
\begin{enumerate}
\item[(iii)] $m=0$ if $c_{011} = 0$ and $b_{011} c_{110} \neq 0$.
\end{enumerate}
\end{proposition}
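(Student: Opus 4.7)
The plan is to apply Theorem \ref{ramones2} to bring the perturbed system into the Lagrange standard form (\ref{poisson-aver-3}) and then analyze the bifurcation functions (\ref{poisson-aver-12}) up to second order. Under the present hypotheses one has $h_1\equiv 1$, $h_2(x)=1+x_3$, $D(x)=x_3$ and $I\equiv 1$, so the diffeomorphism from Theorem \ref{ramones} is $\Phi(x)=(x_1,x_2(1+x_3),x_3)$ with time rescaling factor $\eta(y)=1+y_3$. I will push the perturbation $F$ through $\Phi$, divide by $\eta$, and pass to the cylindrical coordinates of Theorem \ref{ramones2}. Because $F$ is independent of $\varepsilon$ one gets $g_1\equiv 0$, and the components of $g_0(\theta,r,z)$ are read off as the $\varepsilon$-independent parts of $(-G_1^*,-G_3^*)$.

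For items (i) and (ii) the key step is to compute $\bar g_0(r,z)$ by evaluating the elementary integrals $\int_0^{2\pi}\cos^i\theta\sin^j\theta\,d\theta$. Since every monomial in $F$ has total degree two and the cylindrical substitution gives $x_1=r\cos\theta$, $x_2=r\sin\theta/(1+z)$, $x_3=z$, the integrand in each component of $g_0$ is a trigonometric polynomial of low degree and only a handful of parity-even terms survive. I expect that $\bar g_0^{(z)}=0$ combined with $r>0$ forces a quadratic relation of the form $c_{020}r^2+2c_{002}z^2(1+z)^2=0$ (so that $c_{002}/c_{020}<0$ becomes necessary), and that substituting this relation into $\bar g_0^{(r)}=0$ factors out $z(1+z)^2$, reducing the remaining equation to a single linear equation in $z$ whose root is $z_0=k/(c_{002}-k)$ with $k=2(a_{101}+b_{011})$. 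Demanding $r_0>0$ and $z_0\in(-1,\infty)\setminus\{0\}$ should then reproduce exactly the interval condition in (i); the failure of any of these conditions precludes an admissible zero and accounts for case (ii). To certify simplicity of the unique zero produced in (i) I will compute $\det D_{(r,z)}\bar g_0(r_0,z_0)$ directly and then invoke part (i) of the averaging theorem; Remark \ref{explain} transports the resulting $2\pi$-periodic solution of (\ref{poisson-aver-3}) back to the desired periodic orbit of (\ref{poisson-aver-1}) in $U$.

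For item (iii) the eight constraints placed on the coefficients $a_{ijk},b_{ijk},c_{ijk}$ are chosen precisely so that every surviving term in $\bar g_0$ vanishes identically; I will verify this reduction first, which forces passage to case (ii) of the averaging theorem with the second-order bifurcation function
\[
\bar\rho(r,z)=\frac{1}{2\pi}\int_0^{2\pi}D_{(r,z)}g_0(\theta,r,z)\int_0^\theta g_0(s,r,z)\,ds\,d\theta.
\]
I will then specialize to $c_{011}=0$ and $b_{011}c_{110}\neq 0$, compute $\bar\rho$, and show that the planar system $\bar\rho(r,z)=0$ admits no solution with $r_0>0$ and $z_0>-1$, so that case (ii) of the averaging theorem delivers $m=0$ as claimed.

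The hard part will be this second-order computation. After the first-order vanishing, both $D_{(r,z)}g_0$ and the primitive $\int_0^\theta g_0\,ds$ must be expanded as trigonometric polynomials in $\theta$, multiplied, and integrated term by term, and the resulting rational function of $(r,1+z)$ will have a numerator involving many of the original parameters. The qualitative claim that under $b_{011}c_{110}\neq 0$ no admissible zero of $\bar\rho$ exists will presumably be obtained from a factoring argument showing that any such zero would force $b_{011}c_{110}=0$, contradicting the hypothesis; a computer algebra system will be essentially required for bookkeeping.
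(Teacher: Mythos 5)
Your plan for items (i) and (ii) is essentially the paper's own argument: you compute the first-order averaged function $\bar g_0$ via the same diffeomorphism $\Phi(x)=(x_1,x_2(1+x_3),x_3)$ and rescaling $\eta(y)=1+y_3$, and your anticipated structure is exactly what the paper obtains in (\ref{poisson-aver-11}) --- the $z$-component forces $c_{020}r^2+2c_{002}z^2(1+z)^2=0$, and eliminating $r^2$ (you by substitution, the paper by a resultant in $r$; this is only a cosmetic difference) leaves $z_0=k/(c_{002}-k)$, after which the positivity and domain conditions reproduce the restrictions in (i) and (ii). Your intention to check simplicity via $\det D_{(r,z)}\bar g_0(r_0,z_0)$ is a reasonable addition (the paper invokes first-order averaging without displaying this verification), and only $c_{020}=c_{002}=a_{101}+b_{011}=0$ (with $c_{200}=0$) are what force $\bar g_0\equiv 0$; the remaining constraints in (iii) are merely computational simplifications at second order, a minor misattribution in your write-up.

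The genuine gap is your claim that $g_1\equiv 0$ because $F$ does not depend on $\varepsilon$. The expansion $G(\theta,r,z;\varepsilon)=\sum_{k\ge 0}g_k\varepsilon^k$ refers to the right-hand side of (\ref{poisson-aver-3}), which is obtained from (\ref{poisson-aver-2}) by dividing by $\dot\theta=-1+\frac{\varepsilon}{r}G_2^*$. Expanding this denominator gives, for instance for the radial component, $G_1=-G_1^*-\frac{\varepsilon}{r}G_1^*G_2^*+O(\varepsilon^2)$, and similarly $-\frac{\varepsilon}{r}G_j^*G_2^*$ corrections in the $z$-components; hence $g_1=-\frac{1}{r}\,(G_1^*G_2^*,\,G_3^*G_2^*)\big|_{\varepsilon=0}\not\equiv 0$ even for an $\varepsilon$-independent perturbation $F$. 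Since the second-order function is $\rho=D_{(r,z)}g_0\int_0^\theta g_0(s,r,z)\,ds+g_1$, omitting $g_1$ (as in the displayed formula of your plan) computes a different function from the paper's $\bar\rho$, and the nonexistence of admissible zeros under $c_{011}=0$, $b_{011}c_{110}\neq 0$ is then unsupported: the terms you drop carry exactly the kind of products of perturbation coefficients that enter $P$ and $Q$. To repair item (iii) you must include $\bar g_1$ in the second-order bifurcation function before attempting the factoring argument.
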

\begin{proof}
Note first that the form of the structure matrix ${\cal J}(x)$ implies that the invariant Casimir is $D(x) = x_3$. Second, since $h(x_1, x_3) = x_3$ we construct the diffeomorphism $\Phi(x) = (x_1, x_2(1+x_3), x_3)$ with inverse $\Phi^{-1}(y) = (y_1, y_2 / (1+y_3), y_3)$ in the domain $U = \mathbb{R}^2 \times (-1, \infty)$. Taking $c_{200} = 0$ and after some computations, the first bifurcation function $\bar{g}_{0}(r, z) = (\bar{g}_{01}(r, z)), \bar{g}_{02}(r, z)$ has the components
\begin{eqnarray}
\bar{g}_{01}(r, z) &=& -  \frac{r[3 c_{020} r^2 + 4 z(1+z^2) \{ a_{101} + b_{011} +(a_{101} + b_{011} + c_{002}) z \} ]}{8 (1+z)^4} \ , \nonumber \\
\bar{g}_{02}(r, z) &=& -\frac{c_{020} r^2 + 2 c_{002} z^2(1+z)^2}{2 (1+z)^3} . \label{poisson-aver-11}
\end{eqnarray}
Now we will find zeros $(r_0, z_0) \in U^*$ of $\bar{g}_{0}(r, z)$. In particular $r_0>0$ and $z_0 \in (-1, \infty)$. The resultant $R(z)$ with respect to $r$ of the polynomials in the numerators of the components of $\bar{g}_{0}(r, z)$ is
$$
R(z) = -8 c_{002} c_{020}^2 z^4 (1 + z)^6 (c_{002} z - 2 a_{101} (1 + z) - 2 b_{011} (1 + z))^2
$$
whose roots are $0$, $-1$ and $z_0$ (its precise expression will be displayed later). The root $0$ is not valid because its associated value of $r_0$ is $r_0 = 0$. The root $-1$ is not valid because $-1 \not\in (-1, \infty)$ according to the definition of $U$.

To summarize, we have proved that $\bar{g}_{0}(r, z)$ possesses either none or exactly one root $(r_0, z_0)$ with $r_0 > 0$ and $z_0 \in (-1, \infty)$ depending on some parameter restrictions of the perturbation field $F$. More precisely, the root is
$$
(r_0, z_0) = \left(  \sqrt{- \frac{8 (a_{101} + b_{011})^2 c_{002}^3}{(c_{002} -2 (a_{101} + b_{011}))^4 c_{020}}} , \frac{2 (a_{101} + b_{011})}{c_{002} - 2 (a_{101} + b_{011})} \right)
$$
which appears under the following restrictions: (i) $c_{002} (a_{101} + b_{011}) \neq 0$ and $c_{002}/c_{020} < 0$ to have $r_0 > 0$; (ii) To ensure that $z_0 \in (-1, \infty)$ we define $k = 2 (a_{101} + b_{011})$ and we must have either $c_{002} \in \mathbb{R} \backslash [0, k]$ if $k > 0$ or $c_{002} \in \mathbb{R} \backslash [k, 0]$ if $k < 0$; (iii) in order to avoid zeroes in denominators $c_{020} \neq 0$ and $c_{002} - 2 (a_{101} + b_{011}) \neq 0$. Using the averaging theory at first order we conclude with statements (i) and (ii).

In order to prove statement (iii) equation (\ref{poisson-aver-11}) is considered in the case $c_{020} = c_{002} = a_{101} + b_{011} = 0$ so that $\bar{g}_{0}(r, z) \equiv 0$. Then we must go to second order of bifurcation. After some computations it is found that the second bifurcation function $\bar{\rho}(r, z)$ given by (\ref{poisson-aver-12}) is
$$
\bar{\rho}(r, z) = \left( \frac{r P(r^2,z)}{8 (1+z)^6}, \frac{-z Q(r^2,z)}{2 (1+z)^5} \right)
$$
where $P$ and $Q$ are polynomials. For simplicity in the following computations we take $b_{002} = b_{200} =  a_{110} = b_{020} = c_{011} = 0$, $a_{200} = 2 c_{101}$ and $b_{011} c_{110} \neq 0$. Then $P(r^2, z) = -b_{011} c_{110} r^2 (1 + z)^2 (-1 + 2 z)$ and $Q(r^2, z)= b_{011} c_{110} r^2 (1 + z)^2$ and therefore $\bar{\rho}$ does not have any root $(r_0, z_0)$ with $r_0 > 0$ and $z_0 \in (-1, \infty)$. This proves statement (iii).
\end{proof}

\subsection{A zero-Hopf singularity analytic normal form}

With $\Omega = \mathbb{R}^3$ and the invariant Casimir $D(x) = x_3 + \phi(x_1, x_2)$ we obtain the following structure matrix
\begin{equation}\label{structure-NF}
{\cal J}(x) = \left( \begin{array}{ccc} 0 & 1  & - \partial_{x_2}\phi(x) \\ -1 & 0 & \partial_{x_1}\phi(x) \\ \partial_{x_2}\phi(x) & - \partial_{x_1}\phi(x) & 0 \end{array} \right) .
\end{equation}
According to the previous theory, we shall use the Hamiltonian $H(x) = \frac{1}{2} (x_1^2 + x_2^2)$. This is a case in which the inverse of the diffeomorphism $\Phi$ given by (\ref{Poisson-Aver-3}) on $\Omega$ is explicitly and globally invertible in $\Omega$. More precisely $\Phi(x) = (x_1, x_2, x_3 + \phi(x_1,x_2))$ whose inverse becomes $\Phi^{-1}(y) =(y_1, y_2, y_3- \phi(y_1, y_2))$ and we have a constant rescaling $\eta(y) = 1$.

It is known (see for instance \cite{M}) that our Poisson system (\ref{poisson-aver-5}) associated with these structure matrix ${\cal J}$ and Hamiltonian $H$ corresponds to a special case of an analytic normal form of the zero-Hopf singularity at the origin in the particular case that $\phi(x,y) = P(x^2+y^2)$ with $P$ an analytic function at the origin with $P(0)=0$.

\begin{proposition}
Consider the unperturbed Poisson system (\ref{poisson-aver-5}) in $\mathbb{R}^3$ with Hamiltonian $H(x) = \frac{1}{2} (x_1^2 + x_2^2)$ and structure matrix (\ref{structure-NF}) where $\phi(x_1, x_2) = P(x_1^2 + x_2^2)$ is an analytic function with $P(0)=0$. Consider a perturbation of it as in (\ref{poisson-aver-1}) with any analytic perturbation field
\begin{equation}\label{laF}
F(x; \varepsilon) = \left(\sum_{i+j+k \geq 2} a_{ijk} x_1^i x_2^j x_3^k, \sum_{i+j+k \geq 2} b_{ijk} x_1^i x_2^j x_3^k, \sum_{i+j+k \geq 2} c_{ijk} x_1^i x_2^j x_3^k \right) .
\end{equation}
Assume that at least one of the following three conditions is not fulfilled: (I) $a_{ijk} = 0$ if $i$ is odd and $j$ is even; (II)  $b_{ijk} = 0$ if $i$ is even and $j$ is odd; (III) $c_{ijk} = 0$ if both $i$ and $j$ are even. Let $m$ be the number of nontrivial periodic orbits in $\mathbb{R}^3$ of (\ref{poisson-aver-1}) with $|\varepsilon| \neq 0$ sufficiently small. Then the following holds.
\begin{enumerate}
\item[(i)] If $F$ is an homogeneous polynomial vector field then $m=0$.

\item[(ii)] If $F$ is polynomial of degree 3 then $m \in \{0, 1, 2 \}$ and all the possibilities are realizable.
\end{enumerate}
\end{proposition}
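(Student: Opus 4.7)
The plan is to apply Theorem \ref{ramones2} to place (\ref{poisson-aver-1}) in Lagrange standard form and analyze the simple zeros of the first-order averaging function $\bar g_0(r,z)$ from (\ref{poisson-aver-12}). Here the reduction is especially transparent since $\Phi^{-1}(y)=(y_1,y_2,y_3-P(y_1^2+y_2^2))$ and $\eta\equiv 1$. After the cylindrical substitution $(y_1,y_2,y_3)=(r\cos\theta,r\sin\theta,z)$ I would introduce the auxiliary variable $w=z-P(r^2)$; the change $(r,z)\mapsto(r,w)$ has unit Jacobian determinant and therefore preserves simple zeros. A monomial $a_{ijk}x_1^ix_2^jx_3^k$ then contributes $a_{ijk}r^{i+j}w^k\cos^{i+1}\theta\sin^j\theta$ to $G_1^*$, with analogous contributions from $b_{ijk}$ and $c_{ijk}$. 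Since $\frac{1}{2\pi}\int_0^{2\pi}\cos^m\theta\sin^n\theta\,d\theta$ is a positive Beta-function value when $m,n$ are both even and vanishes otherwise, the parity restrictions (I), (II), (III) are exactly those that annihilate, respectively, the $a$-, $b$- and $c$-contributions to $\bar g_0$. Hence the standing hypothesis ensures $\bar g_0\not\equiv 0$, item (i) of the first-order averaging theorem in Subsection 2.3 applies, and $m$ equals the number of simple zeros of $\bar g_0$ with $r_0>0$. Two structural facts follow at once: each surviving monomial is a polynomial in $(r,w)$ of total degree $i+j+k$, and in $\bar g_{01}$ the parities force $i+j\geq 1$, so $r$ is an overall factor.

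For part (i), when $F$ is homogeneous of degree $d\geq 2$ both $\bar g_{01}(r,w)$ and $\bar g_{03}(r,w)$ are homogeneous polynomials of degree $d$ in $(r,w)$. Any common zero $(r_0,w_0)\ne(0,0)$ then generates the whole ray $\{(tr_0,tw_0):t>0\}$ of common zeros, so the radial direction belongs to the kernel of $D_{(r,w)}\bar g_0(r_0,w_0)$ and the zero is not simple. Hence $m=0$.

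For part (ii), enumerating the admissible monomials for $F$ of degree $3$ produces
\[
\bar g_{01}(r,w)=r\bigl(\alpha w+\beta r^2+\gamma w^2\bigr),\qquad \bar g_{03}(r,w)=A r^2+B w^2+C r^2 w+D w^3,
\]
with $\alpha,\beta,\gamma,A,B,C,D$ explicit linear combinations of the perturbation coefficients weighted by the positive Beta-function averages. For $r>0$ the first equation gives $\beta r^2=-w(\alpha+\gamma w)$; substituting into $\bar g_{03}=0$ and discarding the spurious factor of $w$ (whose zero forces $r=0$) yields the quadratic in $w$
\[
(\gamma C-\beta D)w^2+(\alpha C+\gamma A-\beta B)w+\alpha A=0,
\]
which has at most two real roots, each producing an admissible $(r_0,w_0)$ only when $r_0^2=-w_0(\alpha+\gamma w_0)/\beta>0$. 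This gives the bound $m\leq 2$; the degenerate subcases $\beta=0$ or further coefficient vanishings are handled by the same elementary elimination. Completing the proof then requires exhibiting explicit examples realizing $m=0,1,2$, and this realizability bookkeeping is the main obstacle: one must choose the $a_{ijk},b_{ijk},c_{ijk}$ so that the quadratic has the desired number of real roots, so that each root lies in the region where $r_0^2>0$, and so that the Jacobian of $\bar g_0$ at the root is nondegenerate. A systematic approach is to first fix $\alpha,\beta,\gamma$ to determine the admissible $w$-interval and then tune $A,B,C,D$ to place the quadratic's roots inside or outside this interval as required.
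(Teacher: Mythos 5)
Your route coincides in essentials with the paper's: reduce via Theorem \ref{ramones2}, use the parity of the averages $\frac{1}{2\pi}\int_0^{2\pi}\cos^m\theta\sin^n\theta\,d\theta$ to show that (I)--(III) are exactly the conditions for $\bar g_0\equiv 0$, pass to $w=z-P(r^2)$, dispose of the homogeneous case by homogeneity (you argue along rays via Euler's relation, the paper factors the homogeneous components into linear forms over $\mathbb{C}$; the two arguments are equivalent), and for degree $3$ eliminate $r^2$ to obtain $w\,Q_2(w)=0$ with $Q_2$ quadratic, hence $m\le 2$. One step, however, is not justified as written: since $y_3=x_3+\phi(x_1,x_2)$, the third component of the transformed perturbation is $(F_3+\partial_{x_1}\phi\,F_1+\partial_{x_2}\phi\,F_2)\circ\Phi^{-1}$, so the averaged function is $\bar g_0(r,z)=\bigl(A(r,z),\,B(r,z)+2rP'(r^2)A(r,z)\bigr)$ rather than the ``clean'' pair $(A,B)$ whose monomial contributions you list; your remark that $(r,z)\mapsto(r,w)$ has unit Jacobian covers only the substitution $w=z-P(r^2)$, not the discarding of the cross term $2rP'(r^2)A$. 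The paper handles this explicitly: zeros of $\bar g_0$ coincide with zeros of $(A,B)$, and at such zeros the Jacobian determinants agree (a row operation), so simple zeros coincide as well; alternatively, since here $w$ is nothing but the original $x_3$, averaging directly in the original cylindrical coordinates avoids the cross term entirely. (Incidentally, your second component $Ar^2+Bw^2+Cr^2w+Dw^3$ correctly retains the $w^2$ term generated by $c_{002}$, which the paper's displayed $G_2^\dag$ folds into the $w^3$ coefficient; this does not affect the elimination, which still produces a quadratic $Q_2$.)

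The more substantive shortfall is in statement (ii): the claim is that $m=0,1,2$ are \emph{all realizable}, and you explicitly stop short of proving this, calling the realizability ``the main obstacle'' and only sketching a strategy (you also defer the degenerate subcases such as $\beta=0$). The paper closes this point by observing that the coefficients $\alpha_1,\beta_1,\gamma_1,\alpha_2,\beta_2,\gamma_2$ are independent, since they are nonzero multiples of disjoint combinations of the free parameters $a_{ijk},b_{ijk},c_{ijk}$ in (\ref{laF}); consequently the discriminant of $Q_2$, the sign condition giving $r_0^2>0$, and the simplicity of the resulting zeros can be prescribed arbitrarily. To complete your argument you should either invoke this independence explicitly or exhibit three concrete coefficient choices realizing $m=0$, $1$ and $2$; as it stands, the realizability half of (ii) is asserted but not established.
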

\begin{proof}
Let us first recall that the map $\Phi$ of (\ref{Poisson-Aver-3}) is in this case a global diffeomorphism in $U = \mathbb{R}^3$. Take any analytic perturbation field $F(x; \varepsilon) = (F_1(x), F_2(x), F_3(x))$ such that $F_i(x; \varepsilon) = O(\|x\|^2)$.  Computing the first order bifurcation function $\bar{g}_{0}$ as defined in (\ref{poisson-aver-12}) we find that $\bar{g}_{0}(r, z) = (A(r, z), B(r, z) + 2 r  P'(r^2) A(r, z))$ where
\begin{eqnarray*}
A(r, z) &=& - \frac{1}{2 \pi} \int_0^{2 \pi} \cos\theta \, F_1 \circ \gamma(\theta, r, z) + \sin\theta \, F_2 \circ \gamma(\theta, r, z) \, d \theta , \\
B(r, z) &=& - \frac{1}{2 \pi} \int_0^{2 \pi} F_3 \circ \gamma(\theta, r, z)  \, d \theta ,
\end{eqnarray*}
and $\gamma(\theta, r, z) = (r \cos\theta, r \sin\theta, z-P(r^2))$. Clearly the zeroes of $\bar{g}_{0}$ coincide with the zeroes of $\bar{G}_{0}(r,z) = (A(r,z), B(r,z))$. Additionally the Jacobian is
$$
\det(D \bar{g}_{0}) = \frac{\partial B}{\partial z} \frac{\partial A}{\partial r} - \frac{\partial A}{\partial z} \left( 2 A (P' + 2 r^2 P'') + \frac{\partial B}{\partial r} \right)
$$
which implies that the simple zeroes $(r_0, z_0)$ of $\bar{g}_{0}$ also coincide with the simple zeroes of $\bar{G}_{0}(r,z)$.

Using now the Maclaurin expansion (\ref{laF}) we obtain that
$$
A(r,z) = \sum_{i+j+k \geq 2} I_{ijk} \, r^{i+j} (z-P(r^2))^k \ , \  B(r,z) = \sum_{i+j+k \geq 2} J_{ijk} \, r^{i+j} (z-P(r^2))^k \ ,
$$
where
\begin{eqnarray*}
I_{ijk} &=&  - \frac{1}{2 \pi} \int_0^{2 \pi} (a_{ijk} \cos^{i+1}\theta \sin^j\theta + b_{ijk} \cos^{i}\theta \sin^{j+1}\theta)\, d \theta, \\
J_{ijk} &=&  - \frac{1}{2 \pi} c_{ijk} \int_0^{2 \pi} \cos^{i}\theta \sin^j\theta \, d \theta.
\end{eqnarray*}
It is worth mentioning that $\bar{g}_{0} \equiv 0$ if and only if $\bar{G}_{0} \equiv 0$ which is equivalent to $I_{ijk} = J_{ijk} = 0$ for all subindexes. Therefore, $\bar{g}_{0} \equiv 0$ if and only if $a_{ijk} = 0$ if $i$ is odd and $j$ is even, $b_{ijk} = 0$ if $i$ is even and $j$ is odd and $c_{ijk} = 0$ if both $i$ and $j$ are even. Hence, if at least one of the the conditions (I), (II) and (III) of the statement of the proposition  is not fulfilled  then we have $\bar{g}_{0} \not\equiv 0$ and we can use first order averaging theory to study the periodic orbits of the perturbed system (\ref{poisson-aver-1}) in all the phase space $\mathbb{R}^3$.

We use the new variable $w$ defined by $w = z-P(r^2)$. Thus we obtain that $G^\dag(r, w) = \bar{G}_{0}(r, w+P(r^2)) = (A(r, w+P(r^2)), B(r, w+P(r^2)))$ has the expression
$$
G^\dag(r, w) = ({G}^\dag_1(r, w), {G}^\dag_2(r, w)) = \left(\sum_{i+j+k \geq 2} I_{ijk} \, r^{i+j} w^k, \sum_{i+j+k \geq 2} J_{ijk} r^{i+j} w^k \right) .
$$
We shall analyze the simple real zeroes $(r_0, w_0)$ of $G^\dag$ with $r_0>0$ in some cases. Recall that these zeroes are in correspondence with those zeroes $(r_0, z_0)$ of $\bar{g}_{0}$ via $z_0 = w_0 + P(r_0^2)$.

To prove statement (i) we assume from now on that $F$ is any homogeneous polynomial perturbation field degree of $d \geq 2$, namely it has the form $F(x; \varepsilon) = \left(\sum_{i+j+k = d} a_{ijk} x_1^i x_2^j x_3^k, \sum_{i+j+k = d} b_{ijk} x_1^i x_2^j x_3^k, \sum_{i+j+k = d} c_{ijk} x_1^i x_2^j x_3^k \right)$. Then $G^\dag(r, w) =  \left(\sum_{i+j+k = d} I_{ijk} \, r^{i+j} w^k, \sum_{i+j+k = d} J_{ijk} r^{i+j} w^k \right)$ has its components $G^\dag_i$ given by  homogeneous polynomials in $\mathbb{R}[r,w]$ of degree $d$. Therefore we state that ${G}^\dag$ has not simple real zeros $(r_0, w_0)$ with $r_0 > 0$. Such claim follows after taking into account that (due to homogeneity) there is a unique factorization $G^\dag_i = \prod_{j=1}^d L_{ij}$ where $L_{ij}(r, w) \in \mathbb{C}[r,w]$ are linear polynomials, hence $L_{ij}(0, 0) = 0$. In particular, the only real zeroes $(r_0, w_0) \in \mathbb{R}^2$ of ${G}^\dag$ are either $(r_0, w_0) = (0,0)$ (corresponding to the intersection of two real lines $L_{1j} = 0$ and $L_{2 k} = 0$) or they are multiple of each other (belonging to the intersection of two real coincident lines $L_{1j} = L_{2 k} = 0$).

Going back we find that $\bar{g}_{0}$ has no real simple root $(r_0, z_0)$ with $r_0 > 0$. Hence, since $U = \mathbb{R}^3$ by applying first order averaging theory we conclude that, for $|\varepsilon| \neq 0$ sufficiently small, the perturbed system (\ref{poisson-aver-1}) has no periodic orbits in $\mathbb{R}^3$.

Now we shall prove statement (ii). Let $F$ be any admissible polynomial perturbation field of degree 3, hence it is of the form (\ref{laF}) but with $2 \leq i+j+k \leq 3$. We arrive at 
$$
G^\dag(r, w) =  ({G}^\dag_1(r, w), {G}^\dag_2(r, w)) = \left(\sum_{2 \leq i+j+k \leq 3} I_{ijk} \, r^{i+j} w^k, \sum_{2 \leq i+j+k \leq 3} J_{ijk} r^{i+j} w^k \right) .
$$
More precisely one has
$$
G_1^\dag(r, w) = r \Big[ \alpha_1 r^2 +\beta_1 w + \gamma_1 w^2  \Big] , \ G_2^\dag(r, w) = \alpha_2 r^2 + \beta_2 r^2 w + \gamma_2 w^3,
$$
where $\alpha_1 = -(a_{120} + 3 a_{300} + 3 b_{030} + b_{210})/8$, $\beta_1 = -(a_{101} + b_{011}) / 2$, $\gamma_1 = -(a_{102} + b_{012})/2$, $\alpha_2 = -(c_{020} + c_{200})/2$, $\beta_2 =- (c_{021} + c_{201})/2$ and $\gamma_2 0 -(c_{002} + c_{003})$. Observe that the parameters $\alpha_i$, $\beta_j$ and $\gamma_k$ are independent because of the independence of the parameters $a_{ijk}$, $b_{ijk}$ and $c_{ijk}$. Therefore, solving for $r^2$ from $G_1^\dag(r, w) =0$ and inserting it into the equation $G_2^\dag(r, w) =0$ gives $w Q_2(w) = 0$ with $Q_2(w)$ a polynomial of second degree. Note that zeroes $(r_0, w_0)$ of $G^\dag$ with $w_0 = 0$ imply $r_0=0$, hence they are rejected. In summary $w_0$ must be a root of $Q_2$ and, depending on its discriminant, all the possibilities for the number $m$ of simple zeroes are obtained: $m \in \{0,1,2\}$.
\end{proof}

\subsection{Duffing oscillator}\label{sec.duf}

Duffing oscillator is recognized as one of the paradigmatic examples of planar Hamiltonian dynamics \cite{guho}. In what follows we consider the unforced and undamped Duffing oscillator $\ddot{\varphi} + \varphi + \beta \varphi^3 = 0$ with a real stiffness parameter $\beta$. It is well-known that for $\beta >0$, the equation represents a hard spring system, while for $\beta<0$, it corresponds to a soft spring. In order to analyze perturbations in which parameter $\beta$ is no longer a constant, the system will be embedded in a three-dimensional space by defining $x=(x_1,x_2,x_3)=(\varphi, \dot{\varphi},\beta)$. In terms of these variables, the model can be written as a Poisson system in $\Omega = \mathbb{R}^3$
\begin{equation}\label{duffing1}
\frac{\mbox{\rm d}x}{\mbox{\rm d}t} = {\cal J}_D \cdot \nabla H(x)
\end{equation}
of structure matrix ${\cal J}_D$ which is the 3-dimensional version of (\ref{jdnd}), and Hamiltonian function $H(x)$ of the form (\ref{Poisson-Aver-1}) with $h_1^2(x)=1+x_3x_1^2/2$ and $h_2^2(x)=1$. Notice that $D(x) = x_3$ is a Casimir invariant. On each symplectic leaf $\{ D(x) = c \}$ with $c \in \mathbb{R}$ the system has a center at $(x_1, x_2)=(0,0)$ which has an unbounded period annulus if $c \geq 0$ and a period annulus bounded by heteroclinic connections between two saddle points when $c < 0$.

Let us regard now the analytical perturbations of (\ref{duffing1}) as already defined in (\ref{poisson-aver-1}), that is with an analytic perturbation field given by $F(x; \varepsilon) = (F_1(x; \varepsilon), F_2(x; \varepsilon), F_3(x; \varepsilon))$ without constant nor linear terms in the phase variables $x$.

\begin{proposition}
Consider the unperturbed Poisson representation (\ref{duffing1}) in $\Omega = \mathbb{R}^3$ associated with Duffing system. Consider a perturbation of it as in (\ref{poisson-aver-1}) with any analytic perturbation field $F(x; \varepsilon) = (F_1(x; \varepsilon), F_2(x; \varepsilon), F_3(x; \varepsilon))$ such that $F_i(x; \varepsilon) = O(\|x\|^2)$. Taking $\mathbf{0}=(0,0,0; 0) \in \mathbb{R}^3 \times \mathbb{R}$ and $\partial_{n_1 n_2 n_3} \equiv \partial / \partial x_1^{n_1} \partial x_2^{n_2} \partial x_3^{n_3}$, the following quantities are introduced
\begin{eqnarray}
\Delta_{1} & = & - \partial_{003} F_1(\mathbf{0})-\partial_{012} F_2(\mathbf{0})-\partial_{021} F_1(\mathbf{0})-\partial_{030} F_2(\mathbf{0}) -3\partial_{012} F_1(\mathbf{0})   \nonumber \\
& & -2\partial_{111} F_2(\mathbf{0})-\partial_{120} F_1(\mathbf{0})-3\partial_{201} F_1(\mathbf{0}) -\partial_{210} F_2(\mathbf{0})-\partial_{300} F_1(\mathbf{0}), \nonumber  \\
\Delta_{2} & = & - \partial_{002} F_3(\mathbf{0})-\partial_{020} F_3(\mathbf{0})-2\partial_{101} F_3(\mathbf{0})-\partial_{200} F_3(\mathbf{0}). \nonumber
\end{eqnarray}
In the generic case that $\Delta_1 \neq 0$ or $\Delta_2 \neq 0$, then for $|\varepsilon| \neq 0$ sufficiently small there are no periodic solutions in a neighborhood of the origin of (\ref{poisson-aver-1}).
\end{proposition}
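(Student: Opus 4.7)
The plan is to apply Theorems \ref{ramones} and \ref{ramones2} to the Duffing configuration and then to analyze the first-order averaged function $\bar{g}_0(r,z)$ locally near $(r,z)=(0,0)$, in the spirit of Remark \ref{putada}. Reading off $h_1(x)=\sqrt{1+x_3 x_1^2/2}$, $h_2\equiv 1$, $\phi\equiv 0$, $I\equiv 1$, and $D(x)=x_3$, the diffeomorphism of Theorem \ref{ramones} is $\Phi(x)=(x_1 h_1(x),x_2,x_3)$, whose inverse admits the local Taylor expansion $\Phi^{-1}(y)=(y_1-y_1^3 y_3/4+O(\|y\|^5),\,y_2,\,y_3)$, and whose rescaling factor is $\eta(y)=1+(3/4)\,y_1^2 y_3+O(\|y\|^4)$. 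Together with the cylindrical step of Theorem \ref{ramones2}, these data produce the Lagrange standard form (\ref{poisson-aver-3}). Since the transformed perturbation $F^*(y;\varepsilon)=\eta(y)^{-1}\,D\Phi(\Phi^{-1}(y))\,F(\Phi^{-1}(y);\varepsilon)$ agrees with $F(y;\varepsilon)$ modulo $O(\|y\|^5)$, the first-order coefficient is $g_0(\theta,r,z)=-(\cos\theta\,F_1+\sin\theta\,F_2,\,F_3)$ evaluated at $y=(r\cos\theta,r\sin\theta,z)$ and $\varepsilon=0$, up to a remainder of order $O(\|y\|^5)$.

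Next I would Taylor-expand $\bar{g}_0(r,z)=\frac{1}{2\pi}\int_0^{2\pi} g_0(\theta,r,z)\,d\theta$ to degrees $2$ and $3$ in $(r,z)$ using the elementary integrals $\frac{1}{2\pi}\int_0^{2\pi}\cos^m\theta\sin^n\theta\,d\theta$, which vanish unless both $m$ and $n$ are even. This produces an explicit vector-valued polynomial whose coefficients are specific linear combinations of partial derivatives of $F_1,F_2,F_3$ at the origin. Following Remark \ref{putada}, to rule out small-amplitude periodic orbits it suffices to show that $\bar{g}_0$ has no simple zero $(r_0,z_0)$ with $r_0>0$ in a neighborhood of $(0,0)$. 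The natural strategy is to solve $\bar{g}_{01}(r,z)=0$ implicitly for $z$ as a power series in $r$ and to substitute into $\bar{g}_{02}(r,z(r))$; the leading coefficient of the resulting univariate expansion matches, up to a nonzero multiplicative constant, $\Delta_2$. An analogous elimination with the roles of the two components reversed produces $\Delta_1$. Consequently, if either $\Delta_1\neq 0$ or $\Delta_2\neq 0$, the univariate series in question has no positive root arbitrarily close to $0$, so $\bar{g}_0$ admits no simple zero of interest near the origin.

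Once $\bar{g}_0$ is shown to possess no such relevant simple zeros, item (i) of the averaging principle in Section 2.3, combined with Remark \ref{explain}, yields the absence of periodic solutions of (\ref{poisson-aver-1}) in a neighborhood of the origin for $|\varepsilon|$ sufficiently small, as claimed. The main obstacle is the algebraic bookkeeping required to identify $\Delta_1$ and $\Delta_2$ explicitly from the averaging computation. Several partial derivatives appearing in these quantities (for example $\partial_{101} F_3$ in $\Delta_2$) do not arise from a single direct averaging integral but emerge only after the implicit substitution between the two components of $\bar{g}_0$ together with a careful accounting of the corrections introduced by $\Phi^{-1}$ and $\eta^{-1}$. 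Executing this reconciliation is elementary but computationally demanding, and it is where the bulk of the verification effort lies.
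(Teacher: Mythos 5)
Your reduction steps are fine and coincide with the paper's: the same $\Phi$, $\eta$, and cylindrical change are used, and your observation that $F^{*}(y;\varepsilon)=F(y;\varepsilon)+O(\|y\|^{5})$, so that $g_{0}(\theta,r,z)=-(\cos\theta\,F_1+\sin\theta\,F_2,\,F_3)$ at $y=(r\cos\theta,r\sin\theta,z)$ up to high-order remainders, is correct. The problems begin exactly where you defer the work. First, the logical scheme is wrong: you claim it "suffices to show that $\bar{g}_0$ has no \emph{simple} zero with $r_0>0$ near $(0,0)$" and then invoke item (i) of Section 2.3 together with Remark \ref{explain}. Item (i) is a sufficient condition for \emph{existence} of periodic solutions; its failure does not exclude them. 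Nonexistence requires the complementary statement of first-order averaging, namely that persisting $2\pi$-periodic solutions of (\ref{poisson-aver-3}) must tend to zeros (simple or not) of $\bar{g}_0$, so one must show that $\bar{g}_0$, after factoring out the powers of $r$ that force it to vanish as $r\to0^{+}$, is \emph{nonvanishing} in a whole neighborhood of $(0,0)$. This is precisely how the paper argues: it writes $\bar{g}_0(r,z)=\bigl(r^{3}\hat{g}_1(r,z),\,r^{2}\hat{g}_2(r,z)\bigr)$, computes $\hat{g}_1(0,0)=\Delta_1/16$ and $\hat{g}_2(0,0)=\Delta_2/4$, and concludes by continuity that $\hat{g}=(\hat{g}_1,\hat{g}_2)$ has no zeros at all near the origin when $\Delta_1\neq0$ or $\Delta_2\neq0$. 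Your conclusion "no simple zeros $\Rightarrow$ no periodic orbits" is a genuine gap.

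Second, the mechanism you propose for producing $\Delta_1$ and $\Delta_2$ — solve $\bar{g}_{01}(r,z)=0$ for $z=z(r)$ and read the leading coefficient of $\bar{g}_{02}(r,z(r))$, and symmetrically for $\Delta_1$ — cannot work as described, and you give no computation supporting it. With $g_0$ as you set it up, the low-order Taylor coefficients of $\bar{g}_{01}$ involve only $a_{ijk}$ with $i$ odd, $j$ even and $b_{ijk}$ with $i$ even, $j$ odd, while $\bar{g}_{02}$ involves only $c_{ijk}$ with $i,j$ both even, because all other monomials are annihilated by the $\theta$-average. Hence quantities such as $\partial_{101}F_3(\mathbf{0})$ (appearing in $\Delta_2$) or $\partial_{003}F_1(\mathbf{0})$, $\partial_{201}F_1(\mathbf{0})$, $\partial_{111}F_2(\mathbf{0})$ (appearing in $\Delta_1$) never enter the low-order expansion of $\bar{g}_0$ at all, and no elimination between its two components can make them appear in a leading coefficient; so the asserted match "up to a nonzero multiplicative constant" is unsubstantiated and, for the stated $\Delta$'s, false for your expansion. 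Moreover, the implicit solution $z=z(r)$ presupposes a nondegeneracy of $\partial_z\bar{g}_{01}$ that an arbitrary $F$ need not satisfy, so even the elimination step is not available in general. The paper's route is different and more direct: factor $r^{3}$ and $r^{2}$ out of the two components separately, evaluate the cofactors at $(r,z)=(0,0)$ to obtain $\Delta_1/16$ and $\Delta_2/4$, and use nonvanishing plus first-order averaging; your proposal neither establishes that factorization nor verifies the identification of the $\Delta$'s, which is where the entire content of the proposition lies.
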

\begin{proof}
The unperturbed system (\ref{duffing1}) and its perturbation $F(x; \varepsilon)$ are ready for the application of the procedure described in Theorem \ref{ramones}. The diffeomorphism $\Phi$ defined in (\ref{Poisson-Aver-3}) takes the form $\Phi(x) =(x_1 \sqrt{1 + x_1^2 x_3 / 2}, x_2, x_3)$ in its definition domain $U = \{ (x_1, x_2, x_3) \in \Omega : x_1^2 x_3 + 2 > 0 \}$. The inverse is
$$
\Phi^{-1}(y) = \left( \sqrt{\frac{-1 + \sqrt{1+2 y_1^2 y_3}}{y_3}} \, , \, y_2 \, , \,  y_3 \right).
$$
Moreover the first integral $I$ and the scalar function $\eta(y)$ used in the time rescaling (\ref{jr2darbntt}) now are $I \equiv 1$ and
$$
\eta(y) = \frac{\sqrt{2+4 y_1^2 y_3}}{\sqrt{1 + \sqrt{1+2 y_1^2 y_3}}} .
$$
The corresponding system (\ref{poisson-aver-4}) defined in $\Phi(U)$ becomes as expected
\begin{equation}
\label{duffing2}
\dot{y_1} = y_2 + \varepsilon F^*_1(y; \varepsilon) , \ \dot{y_2} = -y_1 + \varepsilon F^*_2(y; \varepsilon) , \ \dot{y_3} =  \varepsilon F^*_3(y; \varepsilon).
\end{equation}
Now cylindrical coordinates are taken $y \mapsto \Psi(y) = (\theta, r, z)$ with $y_1 = r \cos\theta$, $y_2 = r \sin\theta$ and $y_3 = z$, and we obtain the expression of the corresponding system (\ref{poisson-aver-3}) in Lagrange standard form. In this example, the bifurcation function $\bar{g}_{0}(r, z) = \frac{1}{2 \pi} \int_0^{2 \pi} g_{0}(\theta, r, z) \, d \theta$ as defined in (\ref{poisson-aver-12}) cannot be obtained in closed form. Therefore a local analysis around $(r,z)=(0,0)$ will be performed as it was anticipated in Remark \ref{putada}.

First we check that $\bar{g}_{0}(r, z) = \left( r^3 \hat{g}_1(r, z), r^2 \hat{g}_2(r, z) \right)$ where $\hat{g}_i(0, z) \not\equiv 0$. Therefore the zeroes with $r > 0$ of $\bar{g}_{0}(r, z)$ and those of the function $\hat{g}(r, z) = \left(\hat{g}_1(r, z), \hat{g}_2(r, z) \right)$ coincide. Additionally it is found that $\hat{g}_1(0, 0) = \Delta_1 / 16$ and $\hat{g}_2(0, 0) = \Delta_2 / 4$ where $\Delta_i$ are defined in the statement of the proposition. Clearly, by continuity of $\hat{g}$ at the origin, if $\Delta_1 \neq 0$ or $\Delta_2 \neq 0$ there are not zeroes of $\hat{g}$ in a neighborhood of $(r, z) = (0,0)$ which in turn proves the proposition if we use first-order averaging theory and go back to the original perturbed system (\ref{poisson-aver-1}).
\end{proof}

\mbox{}

\noindent {\bf Acknowledgments.}

\noindent The first author (I.G.) is partially supported by a MICINN grant number MTM2011-22877 and by a CIRIT grant number 2009 SGR 381. The second author (B.H.-B.) would
like to acknowledge the kind hospitality at Lleida University during which part of this work was developed.

\pagebreak

\end{document}